\newtheorem{theorem}{Theorem}
\begin{document}
\title{Optimal Energy Allocation For Delay-Constrained Traffic Over Fading Multiple Access Channels}

\author{\thanks{This publication was made possible by NPRP grant $\#$~NPRP $7-923-2-344$ from the Qatar National Research Fund (a member of Qatar Foundations). The statements made herein are solely the responsibility of the authors.}
\IEEEauthorblockN{Antonious M. Girgis\IEEEauthorrefmark{1},
        Amr El-Keyi\IEEEauthorrefmark{2} and Mohammed Nafie\IEEEauthorrefmark{2}\IEEEauthorrefmark{3}
        }
    \IEEEauthorblockA{\IEEEauthorrefmark{1}Wireless Intelligent Networks Center (WINC), Nile University, Cairo, Egypt\\
    \IEEEauthorrefmark{2} Department of Systems and Computer Engineering, Carleton University, Ottawa, Canada\\
        \IEEEauthorrefmark{3} Electronics and Communications Dept., Faculty of Engineering, Cairo University, Giza, Egypt\\
        Email: \{a.mamdouh@nu.edu.eg, amr.elkeyi@sce.carleton.ca, mnafie@nu.edu.eg\}
    }
}

\maketitle

\begin{abstract}

In this paper, we consider a multiple-access fading channel where $N$ users transmit to a single base station (BS) within a limited number of time slots.
We assume that each user has a fixed amount of energy available to be consumed over the transmission window.
We derive the optimal energy allocation policy for each user that maximizes the total system throughput under two different assumptions on the channel state information. First, we consider the offline allocation problem where the channel states are known a priori before transmission. We solve a convex optimization problem to maximize the sum-throughput under energy and delay constraints. Next, we consider the online allocation problem, where the channels are causally known to the BS and obtain the optimal energy allocation via dynamic programming when the number of users is small. We also develop a suboptimal resource allocation algorithm whose performance is close to the optimal one. Numerical results are presented showing the superiority of the proposed algorithms over baseline algorithms in various scenarios.
\end{abstract}
\IEEEpeerreviewmaketitle

\begin{IEEEkeywords}
Resource allocation, multiple-access channels, fading, dynamic programming.
\end{IEEEkeywords}
\section{Introduction}
Wireless communication channels are characterized by their time-varying fading nature that has significant effect on the performance of wireless networks. Various algorithms have been proposed to design efficient resource allocation schemes that optimize the system performance over fading channels, e.g., by minimizing the transmission power, minimizing the delay, or maximizing the system throughput. Resource allocation over fading channels has been studied for point-to point communication in different contexts, e.g.,~\cite{goldsmith1997capacity,fu2006optimal,negi2002delay,wang2014power,lee2009energy}. In~\cite{goldsmith1997capacity} the expected Shannon capacity for fading channels was obtained when the channel state information (CSI) is known causally at the transmitter and the receiver. Furthermore, it has been shown that the ``water-filling'' algorithm achieves the maximum expected capacity. The authors of~\cite{lee2009energy} considered the problem of minimizing the expected energy to transmit a single packet over a fading channel subject to a hard deadline. In~\cite{fu2006optimal}and~\cite{negi2002delay}, a dynamic program formulation was proposed to maximize a general throughput function under constraints on the delay and the amount of energy available at the transmitter. In~\cite{wang2014power}, the work of~\cite{fu2006optimal} was extended to energy harvesting systems where the transmitter has causal CSI.


The capacity region of the multiple access channel (MAC) has been studied in various settings, see for example~\cite{yu2004iterative,tse1998multiaccess,hanly1998multiaccess,rezki2014capacity,devassy2015finite,budkuley2014jamming}. In~\cite{yu2004iterative}, the capacity region of the Gaussian multiple-input multiple-output (MIMO) MAC was characterized. The authors of~\cite{yu2004iterative} proposed an iterative water-filling algorithm to obtain the optimal transmit covariance matrices of the users that maximize the weighted sum capacity. In~\cite{tse1998multiaccess} the capacity region of the fading MAC was characterized by Tse and Hanly. Furthermore, the power allocation policy that maximizes the long-term achievable rates subject to average power constraints for each user was introduced. In~\cite{hanly1998multiaccess}, Hanly and Tse introduced an information-theoretic characterization of the capacity region of the fading MAC with delay constraints. In addition, they  provided the optimal power allocation policy that achieves the delay-limited capacity. In~\cite{wang2015iterative}, Wang developed the optimal energy allocation strategy for the fading MAC with energy harvesting nodes by assuming that the CSI is \textit{non-causally} known before the beginning of transmission. In~\cite{caire2004variable}, the capacity region of the fading MAC with power constraint on each codeword was investigated. However, the authors of~\cite{caire2004variable} focused their work on the low signal-to-noise ratio (SNR) regime where they showed that the one-shot power allocation policy is asymptotically optimal.

In this paper, we consider a system composed of multiple users transmitting to a single base station (BS) over a fading MAC. The transmission occurs over a limited time duration in which each user has a fixed amount of energy. Some motivating scenarios and applications for this system model are introduced in~\cite{hanly1998multiaccess,fu2006optimal,negi2002delay}, e.g., satellites, remote sensors, and cellular phones with limited amount of energy transmitting delay-sensitive data to a single receiver. We develop energy allocation strategies to maximize the expected sum-throughput of the fading MAC subject to hard deadline and energy constrains. First, we consider the offline allocation problem in which the channel states are known a priori to the BS. We show that the optimal solution of this problem can be obtained via the iterative water filling algorithm. Next, a dynamic program formulation is introduced to obtain the optimal online allocation policy when only causal CSI is available at the BS. Since the computational complexity of the optimal online policy increases exponentially with the number of users,  we develop a suboptimal solution for the online allocation problem by exploiting the proposed offline allocation policy. Moreover, we investigate numerically the performance of the proposed policies and compare them with the equal-energy allocation and the one-shot energy allocation policy of~\cite{caire2004variable}.

The rest of the paper is organized as follows. In Section~\ref{system}, we present the system model and formulate the maximum sum-throughput optimization problem. The offline energy allocation is introduced in Section~\ref{offline}. We study the online allocation in Section~\ref{online}, where dynamic programming is utilized to obtain the optimal policy and a suboptimal policy with reduced computational complexity is proposed. In Section~\ref{results}, we present our numerical results and compare the performance of different policies in various scenarios. Finally, we conclude the paper in Section~\ref{conclusion}.

\section{SYSTEM MODEL}
\label{system}
We consider a discrete-time MAC as shown in Fig.~\ref{F1:system}, where $N$ users communicate with a single BS in a slotted wireless network. We assume a flat-fading channel model in which the channel gain of each user is constant over the duration of the time slot and changes independently from time slot to another according to a known continuous distribution. Thus, the received signal by the BS at time slot $t$ is given by
\begin{equation}
y_t=\sum_{i=1}^{N} \sqrt{h_t^{\left(i\right)}} x_t^{\left(i\right)}+n_t
\end{equation}
where $n_t$ is a zero-mean white Gaussian noise with variance $\sigma^2$, and $x_t^{\left(i\right)}$ is the transmitted signal of user $i$ at time slot $t$. The channel gain between the $i$th user and the BS at time slot $t$ is denoted by $h_t^{\left(i\right)}$, where the channel gains of each user $\left\{h_t^{\left(i\right)}\right\}$, $i\in\lbrace 1,\cdots,N\rbrace$, are independent identically distributed with the cumulative distribution function (CDF) $F_H^{\left(i\right)}\left(x\right)$. Let $E_i$ denote the maximum amount of energy that can be expended by user $i$ during $T$ time slots, where $T$ denotes the transmission window in which each user must transmit his data. Let $\mathcal{N}=\left\{1,\cdots,N\right\}$ denote the set of users communicating with the BS, and $\mathcal{T}=\left\{1,\cdots,T\right\}$ denote the set of the time slots during which communication occurs. Our goal is to maximize the  sum-throughput of the MAC over the transmission window under constraints on the available energy for each user.

Let $e_t^{\left(i\right)}$ denote the consumed energy by the $i$th user at time slot $t$. Hence, the maximum achievable sum-throughput of the MAC at time slot $t$, when the channel gains of all users at time slot $t$ are known, is given by~\cite{cover2012elements}
\begin{equation}\label{eqn7}
R\left(\mathbf{e}_t,\mathbf{h}_t\right)=\tau W\log_2\left(1+\frac{1}{\tau N_o}\sum _{i=1}^{N} h_t^{\left(i\right)}e_t^{\left(i\right)}\right)
\end{equation}
where $W$ and $\tau$ are the channel bandwidth, and the time slot duration, respectively, and $N_o=W\sigma^2$ is the noise power in watts.\footnote{Note that the successive cancellation decoding strategy is the optimal decoding scheme that achieves the maximum sum-throughput of the MAC~\cite{hanly1998multiaccess}} In \eqref{eqn7}, $\mathbf{h}_t=\left[ h_t^{\left(1\right)},\cdots,h_t^{\left(N\right)} \right]$ and $\mathbf{e}_t=\left[e_t^{\left(1\right)},\cdots,e_t^{\left(N\right)}\right]$ are the channel gains vector and the consumed energy vector of all users at time slot $t$, respectively. Let $\mathcal{E}_t^{\left(i\right)}$ be the available energy for user $i$ at time slot $t$. Thus, the evolution of the energy queue of the $i$th user is given by
\begin{equation}\label{energylevel}
\begin{aligned}
\mathcal{E}_{t+1}^{\left(i\right)}=\mathcal{E}_{t}^{\left(i\right)}- e_t^{\left(i\right)} &\qquad t=1,\cdots,T-1
\end{aligned}
\end{equation}
where the initial state of the energy queue is $\mathcal{E}_{1}^{\left(i\right)}=E_i$. In addition, the energy vector $\mathcal{E}_t=\left[\mathcal{E}_{t}^{\left(1\right)},\cdots,\mathcal{E}_{t}^{\left(N\right)}\right]$ represents the energy levels of all users at time slot $t\in\mathcal{T}$.

We aim to get the energy allocation policy for each user $i\in \mathcal{N}$ to maximize the expected sum-throughput of the MAC over a deadline of $T$ slots. Towards this objective, we formulate the following optimization problem:
\begin{equation}\label{eqn1}
\begin{aligned}
\max_{\mathbf{e}_1,\cdots,\mathbf{e}_T} & \quad\mathbb{E}\left\{\sum_{t=1}^{T} R\left(\mathbf{e}_t,\mathbf{h}_t\right)\right\}&&\\
\text{s.t.} &\quad \sum_{t=1}^{T}e_t^{\left(i\right)}= E_i & i&\in \mathcal{N}&\\
&\quad\mathbf{e}_t \succeq \mathbf{0} & t&\in \mathcal{T}&
\end{aligned}
\end{equation}
 where $\mathbf{0}$ denotes a row vector whose elements are equal to zero, $\mathbb{E}$ denotes the expectation with respect to the channel vectors $\mathbf{h}_t$, $t\in \mathcal{T}$, and the maximization is over all feasible energy allocation policies. In the following sections, we first study the offline allocation policy in which the channel gains of all users are known a priori for $T$ time slots. Next, we study different online allocation policies that maximize the expected sum-throughput of the MAC when only causal CSI is available.

\begin{figure}
\centering
\includegraphics[width=7 cm, height=6 cm]{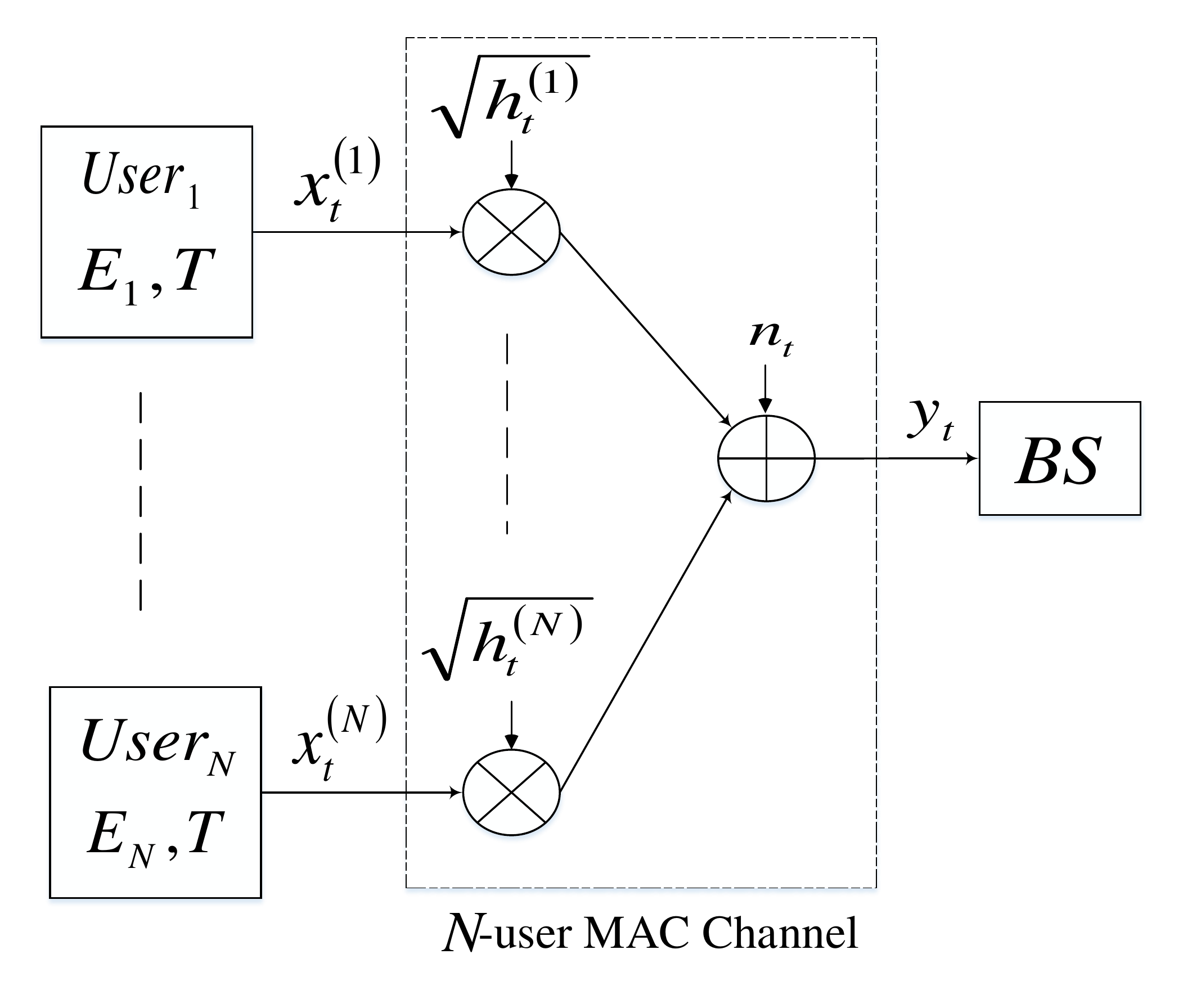}
\caption{System model}
\label{F1:system}
\end{figure}

\section{offline energy allocation}\label{offline}

In this section, we introduce the optimal offline energy allocation policy  when the channel vectors $\mathbf{h}_t$, $t\in \mathcal{T}$, are non-causally known to the BS and the users at the beginning of the transmission. Since the channel vectors $\mathbf{h}_t$, $t\in \mathcal{T}$ are a priori known, the optimization problem~\eqref{eqn1} can be reformulated as a deterministic optimization problem
\begin{equation} \label{eqn2}
\begin{aligned}
\max_{\mathbf{e}_1,\cdots,\mathbf{e}_T} & \quad\sum_{t=1}^{T} R\left(\mathbf{e}_t,\mathbf{h}_t\right)&&
\end{aligned}
\end{equation}
subject to the same constraints of the optimization problem~\eqref{eqn1}.

\begin{theorem}\label{Th1}
The optimal offline transmission policy for the users is obtained by solving the following equations
\begin{eqnarray}
  e_t^{\left(i\right)} &\!\!\!\!=\!\!\!\!& \left(\gamma_{o}^{\left(i\right)}-\gamma_t^{\left(i\right)}\right)^{+} \qquad  \forall i\in \mathcal{N}, t\in \mathcal{T} \label{eqn3}\\
  \sum_{t=1}^{T}e_t^{\left(i\right)} &\!\!\!\!=\!\!\!\!&  E_i  \qquad\qquad \quad\qquad\; \forall i\in \mathcal{N} \label{eqn4}\\
  \gamma_t^{\left(i\right)} &\!\!\!\!=\!\!\!\!& \frac{\tau N_o+\sum_{n\neq i}h_t^{\left(n\right)}e_t^{\left(n\right)}}{h_t^{\left(i\right)}}  \label{last_IWF}
\end{eqnarray}
where $\gamma_{o}^{\left(i\right)}$ is a threshold value obtained by substituting from~\eqref{eqn3} into~\eqref{eqn4}, and $\left(x\right)^{+}=\max\left(0,x\right)$.
\end{theorem}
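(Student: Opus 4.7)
The plan is to recognize the deterministic problem~\eqref{eqn2} as a concave maximization with affine constraints, invoke KKT as necessary and sufficient for optimality, and then re-express the stationarity condition in the ``water-filling'' form~\eqref{eqn3}--\eqref{last_IWF}.

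First, I would check that $R(\mathbf{e}_t,\mathbf{h}_t)$ is concave in $\mathbf{e}_t$: it is the composition of the concave scalar $\log(1+\cdot)$ with a nonnegative affine map of $\mathbf{e}_t$, hence concave, and summing over $t$ preserves concavity. The energy equalities and nonnegativity constraints are affine, so the feasible set is convex. Any strictly feasible point (e.g., $e_t^{(i)}=E_i/T$) satisfies Slater's condition, so strong duality holds and the KKT conditions are both necessary and sufficient for a global optimum.

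Next, I would form the Lagrangian
\begin{equation*}
\mathcal{L}=\sum_{t=1}^{T} R(\mathbf{e}_t,\mathbf{h}_t)-\sum_{i=1}^{N}\lambda_i\!\left(\sum_{t=1}^{T} e_t^{(i)}-E_i\right)+\sum_{i,t}\mu_t^{(i)}\,e_t^{(i)},
\end{equation*}
with multipliers $\lambda_i\in\mathbb{R}$ and $\mu_t^{(i)}\ge 0$, and set $\partial\mathcal{L}/\partial e_t^{(i)}=0$. A direct computation yields
\begin{equation*}
\frac{W/\ln 2}{\tau N_o+\sum_{n}h_t^{(n)}e_t^{(n)}}\,h_t^{(i)}\;=\;\lambda_i-\mu_t^{(i)}.
\end{equation*}
Defining $\gamma_o^{(i)}\triangleq W/(\lambda_i \ln 2)$ and recognizing~\eqref{last_IWF}, this rearranges to $\gamma_t^{(i)}+e_t^{(i)}=\gamma_o^{(i)}\,(1-\mu_t^{(i)}/\lambda_i)$. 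Complementary slackness $\mu_t^{(i)} e_t^{(i)}=0$ then splits into two cases: if $e_t^{(i)}>0$, then $\mu_t^{(i)}=0$ and $e_t^{(i)}=\gamma_o^{(i)}-\gamma_t^{(i)}>0$; if $e_t^{(i)}=0$, then $\mu_t^{(i)}\ge 0$ forces $\gamma_o^{(i)}\le\gamma_t^{(i)}$. Combining gives precisely~\eqref{eqn3}, and the threshold $\gamma_o^{(i)}$ is pinned down by substituting~\eqref{eqn3} into the per-user energy constraint~\eqref{eqn4}, which is strictly monotone in $\gamma_o^{(i)}$ and hence uniquely solvable.

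The step I expect to be the main obstacle is not the KKT derivation itself but the coupling in~\eqref{last_IWF}: the effective ``noise'' $\gamma_t^{(i)}$ for user $i$ depends on the allocations of the remaining users, so~\eqref{eqn3}--\eqref{last_IWF} is an implicit fixed-point system. The natural way to address this is to argue that, with all other users' allocations frozen, user $i$'s sub-problem is a standard single-user water-filling whose solution is exactly~\eqref{eqn3}; cycling through users (iterative water-filling) produces a sequence on which the concave objective is nondecreasing and bounded above, hence convergent to a joint stationary point, which by concavity is the global optimum. This justifies both the existence of a solution to~\eqref{eqn3}--\eqref{last_IWF} and its optimality for~\eqref{eqn2}.
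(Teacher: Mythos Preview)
Your approach is essentially the paper's: recognize~\eqref{eqn2} as a concave program with affine constraints, invoke Slater/KKT, and read off the water-filling form from stationarity plus complementary slackness; the paper's appendix does exactly this (in fact more tersely, stopping at ``solving the KKT conditions yields~\eqref{eqn3}--\eqref{last_IWF}''), while your extra paragraph on the fixed-point coupling and iterative water-filling is really the content of the paper's Theorem~2 rather than Theorem~\ref{Th1}. One small slip: the rearranged stationarity should read $\gamma_t^{(i)}+e_t^{(i)}=\gamma_o^{(i)}/(1-\mu_t^{(i)}/\lambda_i)$ (and $\gamma_o^{(i)}=\tau W/(\lambda_i\ln 2)$), which is what actually gives $\gamma_t^{(i)}\ge\gamma_o^{(i)}$ in the $e_t^{(i)}=0$ case.
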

\begin{proof}
Refer to the Appendix.
\end{proof}

In the single user case, i.e., $N=1$, the optimal offline policy in Theorem~\ref{Th1} is the conventional water-filling algorithm~\cite{goldsmith1997capacity}, where the noise to the channel gain ratio at each time slot $t$ determines the amount of energy allocated to the time slot $t$. In case of the multiple users, i.e., $N>1$, we note that the energy allocation policy of the $i$th user for a given energy allocation of the other users $\mathbf{e}^{\left(i\right)}=\left[e_1^{\left(i\right)},\cdots,e_T^{\left(i\right)}\right]$,  is also obtained via the water-filling algorithm. However, in this case, the interference signals of the other users $\sum_{n\neq i}h_t^{\left(n\right)} e_t^{\left(n\right)}$ at each time slot $t$ are considered as   noise. Hence, the energy allocation policy of the $i$th user is significantly affected by the energy allocation policy of the other users, where the allocated energy for the $i$th user in time slot $t$ depends on $\gamma_t^{\left(i\right)}$ which represents the ratio between the interference-plus-noise power and the channel gain of the $i$th user at time slot $t$.
\begin{algorithm}
\caption{Iterative water-filling (IWF) algorithm }
\label{IWF}
\begin{algorithmic}[1]
\State \textbf{Initialization:} $\mathbf{e}_t=\mathbf{0}$,  $\forall\ t\in \mathcal{T}$
\For {$l=1$ to $L_{\max}$}
\For{$i=1$ to $N$}
\State Let $\gamma_t^{\left(i\right)}=\frac{\tau\sigma^2+\sum_{n\neq i}h_t^{\left(n\right)}e_t^{\left(n\right)}}{h_t^{\left(i\right)}}$, $\forall\ t\in \mathcal{T}$
\State $e_t^{\left(i\right)}=\left(\gamma_{o}^{\left(i\right)}-\gamma_t^{\left(i\right)}\right)^{+}$,  $\forall\ t\in \mathcal{T}$
\State $\sum_{t=1}^{T}e_t^{\left(i\right)}=E_i$
\EndFor
\EndFor
\end{algorithmic}
\end{algorithm}

Note that a closed-form expression for the optimal solution introduced in Theorem~\ref{Th1} can not be found. Nevertheless, the optimal solution can be obtained by applying the iterative water filling algorithm (IWF) described in Algorithm~\ref{IWF} to iteratively solve equations~\eqref{eqn3}--\eqref{last_IWF} where $L_{\max}$ is the maximum number of iterations.  In each iteration, the IWF algorithm successively updates the optimal energy allocation of each user using the water-filling algorithm while assuming that the allocation policy of the other users are fixed. Hence, at each iteration the algorithm tries to maximize the objective function of the problem~\eqref{eqn2} by adapting the energy allocation of a single user while considering the signals of the other users $\sum_{n\neq i}h_t^{\left(n\right)} e_t^{\left(n\right)}$ as noise. Since the objective function is monotonically increasing in the energy allocation policy of each user $\mathbb{e}^{\left(i\right)}$, the objective function cannot decrease after any iteration. As a result,  the IWF solution approaches the optimal solution of problem~\eqref{eqn2} as the number of iterations $L_{\max}$ increases where $L_{\max}$ determines the error tolerance.

The IWF algorithm was applied in~\cite{yu2004iterative} to find the optimal transmit covariance matrices of the users that achieve the boundary of the Gaussian MIMO-MAC capacity. In a similar manner to~\cite{yu2004iterative}, we can assume the channel gains of the $i$th user over the time window ($h_1^{\left(i\right)},\cdots,h_T^{\left(i\right)}$) as effective channel gains of $T$ transmit antennas of the $i$th user. Therefore the results of the IWF algorithm obtained in~\cite{yu2004iterative} can be applied here.
\begin{theorem}
For a finite number of iterations, the IWF algorithm described in Algorithm~\ref{IWF} converges to the optimal allocation policy which is the solution of the optimization problem in~\eqref{eqn2}. Furthermore, the IWF algorithm achieves a sum-throughput lower than the optimal within $\frac{\left(N-1\right)T}{2}$ nats after a single iteration.
\end{theorem}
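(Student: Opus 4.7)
The plan is to treat the two assertions separately. For convergence, I would appeal to standard block coordinate ascent theory. The joint objective $\sum_{t=1}^{T} R(\mathbf{e}_t,\mathbf{h}_t)$ is concave and continuously differentiable in $(\mathbf{e}^{\left(1\right)},\ldots,\mathbf{e}^{\left(N\right)})$, and the feasible region decomposes as a Cartesian product $\prod_{i=1}^{N}\mathcal{F}_i$ with $\mathcal{F}_i=\{\mathbf{e}^{\left(i\right)}\succeq\mathbf{0}:\sum_{t}e_t^{\left(i\right)}=E_i\}$, so the constraints introduce no coupling across users. Each inner step of Algorithm~\ref{IWF} is the exact and unique maximizer over one block, because the rate function is strictly concave in $e_t^{\left(i\right)}$ once the interference $\sum_{n\neq i}h_t^{\left(n\right)}e_t^{\left(n\right)}$ is fixed. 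Consequently the objective is monotonically non-decreasing along iterations and bounded above, every limit point satisfies the full joint KKT system, and since that system has a unique solution (the optimum characterized in Theorem~\ref{Th1}), the iterates converge to it.

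For the one-iteration gap, I would exploit the MIMO--MAC embedding sketched in the paragraph immediately preceding the theorem: view each block $\mathbf{e}^{\left(i\right)}$ as the diagonal of an input covariance matrix $Q_i$, and interpret the gains $(h_1^{\left(i\right)},\ldots,h_T^{\left(i\right)})$ as the squared magnitudes of a diagonal $T$-dimensional channel $\mathbf{H}^{\left(i\right)}$. The sum-throughput then equals $\log\det(\mathbf{I}+\tfrac{1}{\tau N_o}\sum_i \mathbf{H}^{\left(i\right)} Q_i (\mathbf{H}^{\left(i\right)})^{\top})$ whenever $Q_i$ is diagonal, and diagonal $Q_i$ is without loss of optimality because $\mathbf{H}^{\left(i\right)}$ itself is diagonal. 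Under this reduction, Algorithm~\ref{IWF} coincides step-for-step with the iterative water-filling of~\cite{yu2004iterative}, and the one-iteration bound proved there, which scales as half a nat per transmit dimension per ``stale'' user in a single pass, specializes to $(N-1)T/2$ nats in our setting (with $K=N$ users and $n=T$ dimensions each).

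The hard part will be verifying that the one-iteration bound of~\cite{yu2004iterative} transfers verbatim through the embedding. Their derivation rests on (i) the $\log\det$ form of the sum-rate, (ii) the per-user water-filling structure of each block update, and (iii) a concavity slack for $\log(1+x)$ around its maximizer that yields the half-nat-per-dimension constant; all three are preserved by the reduction, but I would still need to trace the argument to confirm that exactly $N-1$ users (those whose updates are based on outdated interference from users updated later in the sweep) contribute to the gap, with user $N$'s final update absorbing the residual. Getting the nats-versus-bits bookkeeping clean, so that the constant lands at $\tfrac{1}{2}$ rather than $\tfrac{1}{2}\log e$, is the main technical care point; by contrast, the block coordinate ascent argument is essentially textbook.
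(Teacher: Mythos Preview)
Your proposal is essentially the same as the paper's: both reduce to Theorems~4 and~5 of~\cite{yu2004iterative} via the diagonal MIMO--MAC embedding described immediately before the theorem. The paper's proof is in fact nothing more than the citation ``See Theorem~4 and Theorem~5 in~\cite{yu2004iterative}'', so your write-up already supplies more justification for why the reduction is legitimate than the paper does.

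One small point to clean up: in the convergence half, you assert that the joint KKT system has a \emph{unique} solution and cite Theorem~\ref{Th1}, but Theorem~\ref{Th1} does not establish uniqueness, and in general the sum-rate depends only on the aggregate $\sum_i h_t^{\left(i\right)} e_t^{\left(i\right)}$ at each slot, so the optimal allocation need not be unique across users. The standard block coordinate ascent result you invoke (and Yu et al.'s Theorem~4) only delivers convergence of the \emph{objective value} to the optimum together with every limit point being optimal; that is enough for the theorem as stated, so simply drop the uniqueness claim rather than try to prove it. Your observations about the $\log\det$ structure, per-block water-filling, and the $(N-1)$-stale-users accounting for the one-iteration bound are all on target.
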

\begin{proof}
See Theorem~$4$ and Theorem~$5$ in~\cite{yu2004iterative}.
\end{proof}

 \section{Online energy allocation}
\label{online}
In this section, we assume that the channel vector $\mathbf{h}_t$ is causally known to the BS and the users at the beginning of time slot $t$ while future channel states are not known. Let $X_t=\left(\mathcal{E}_t,\mathbf{h}_t\right)$ denote the state of the system which is comprised of the channel gains and the energy levels of all users at time slot $t$. We aim to obtain the energy allocation policy $\mathcal{G}^*=\left[\mathbf{e}^*_1\left(X_1\right),\cdots,\mathbf{e}^*_T\left(X_T\right)\right]$ that maximizes the expected sum-throughput of the MAC within a duration of $T$ slots by sequentially solving the optimization problem in~\eqref{eqn1}. The optimal energy allocation policy $\mathcal{G}^*$ can be obtained by formulating the optimization problem in~\eqref{eqn1} as a finite horizon dynamic program (DP) that can be described by the following two equations
\begin{subequations}
\label{eqn6}
\begin{align}
\label{eqn6:1}
U_T\left(\mathcal{E}_T,\mathbf{h}_T\right)&=R\left(\mathcal{E}_T,\mathbf{h}_T\right)\\
U_t\left(\mathcal{E}_t,\mathbf{h}_t\right)=&\underset{\mathbf{0}\preceq \mathbf{e}_t\preceq \mathcal{E}_t}{\max} R\left(\mathbf{e}_t,\mathbf{h}_t\right)+\overline{U}_{t+1}\left(\mathcal{E}_{t}-\mathbf{e}_t\right) \,\forall 1\leq t<T
\end{align}
\end{subequations}
where $\overline{U}_{t+1}\left(\mathcal{E}\right)=\mathbb{E} \left\{ U_{t+1}\left(\mathcal{E},\mathbf{h}\right)\right\}$. The equations in~\eqref{eqn6} are  Bellman's equations of the finite horizon DP~\cite{bertsekas1995dynamic}, where $\overline{U}_{t+1}\left(\mathbf{\mathcal{E}}\right)$ is the maximum expected sum-throughput that can be obtained during the remaining $T-t$ slots given that the energy levels of all users is $\mathcal{E}$. Note that the optimal policy is a vector of functions mapping the current state of the system (the channel gains and the energy levels) to an amount of energy determined for each user. In~\eqref{eqn6:1} the users transmit with all available energy $\mathcal{E}_{T}$ to maximize the total sum-throughput at the last time slot $T$. On the other hand, for time slots $t=1,\cdots,T-1$ there is a tradeoff between the current reward $R\left(\mathbf{e}_t,\mathbf{h}_t\right)$ and the expected future reward $\overline{U}_{t+1}\left(\mathbf{\mathcal{E}_t-\mathbf{e}_t}\right)$. Hence, the optimal energy allocated for each user at time slot $t$, is determined by maximizing the current throughput plus the expected future throughput.

\subsection{The optimal policy}
The optimal allocation policy $\mathcal{G}^*$ is obtained recursively by solving Bellman's equation in~\eqref{eqn6} at each time slot $t\in \mathcal{T}$, where $\overline{U}_{t+1}\left(\mathcal{E}\right)$ is computed backwards in time. However, we can not get a closed-form expression for the expected reward function $\overline{U}_{t+1}\left(\mathcal{E}\right)$ even in the single user case. Therefore, $\overline{U}_{t+1}\left(\mathcal{E}\right)$ is computed numerically using the discretization method~\cite{bertsekas1995dynamic}.

\subsection{Suboptimal policy}~\label{sub}
The computational complexity required to solve~\eqref{eqn6} numerically grows exponentially with the number of users~\cite{bertsekas1995dynamic}. In order to alleviate this problem, the one-shot energy allocation policy was introduced in~\cite{negi2002delay} and~\cite{caire2004variable} to solve~\eqref{eqn6} efficiently.  The one-shot energy allocation policy arises from the linear approximation of the throughput function, i.e.,
\begin{equation}
  R\left(\mathbf{e}_t,\mathbf{h}_t\right)\approx  \frac{1}{\sigma^2}\sum _{i=1}^{N} h_t^{\left(i\right)}e_t^{\left(i\right)}.
\end{equation}
Note that the linear approximation is acceptable in the wideband regime, i.e., when  $W\rightarrow\infty$, and/or when all users transmit at low SNR, where the transmit SNR of the $i$th user is given by $\text{SNR}_i=\frac{E_i h_{o}^{\left(i\right)}}{\tau N_o}$, $ h_o^{\left(i\right)}=\int_{0}^{\infty} x dF_H^{\left(i\right)}\left(x\right)$. Hence,  Bellman's equations  can be restated as follows
\begin{subequations}
\label{eqn9}
\begin{align}
\label{eqn9:1}
\tilde{U}_T\left(\mathcal{E}_T,\mathbf{h}_T\right)&=\sum _{i=1}^{N} h_T^{\left(i\right)}\mathcal{E}_T^{\left(i\right)}\\
\tilde{U}_t\left(\mathcal{E}_t,\mathbf{h}_t\right)=&\underset{\mathbf{0}\preceq \mathbf{e}_t\preceq \mathcal{E}_t}{\max} \sum _{i=1}^{N} h_t^{\left(i\right)}e_t^{\left(i\right)}+\overline{\tilde{U}}_{t+1}\left(\mathcal{E}_{t}-\mathbf{e}_t\right), 1\leq t<T
\end{align}
\end{subequations}
By applying the DP recursion backward in time from the time slot $t=T$ to the time slot $t=1$, the expected reward function $\overline{\tilde{U}}_{t}\left(\mathcal{E}\right)$  for $t\in\mathcal{T}$  can be computed as follows
\begin{equation}
\overline{\tilde{U}}_{t}\left(\mathcal{E}\right)=\sum_{i=1}^{N}\mathcal{E}^{\left(i\right)}\nu_t^{\left(i\right)}
\end{equation}
Furthermore, the one-shot energy allocation policy which solves~\eqref{eqn9} is given by
\begin{equation}
e_t^{\left(i\right)}=\left\{\begin{array}{ll}
\mathcal{E}_t^{\left(i\right)} & \text{if}\ \ h_t^{\left(i\right)}>\nu_{t+1}^{\left(i\right)}\\
0   & \text{if}\ \ h_t^{\left(i\right)}\leq\nu_{t+1}^{\left(i\right)}
\end{array}
\right.,\ \forall\ i\in\mathcal{N}
\end{equation}
where $\nu_{t-1}^{\left(i\right)}=\mathbb{E}_{h_t^{\left(i\right)}}\left\{\max\lbrace h_t^{\left(i\right)},\nu_t^{\left(i\right)}\rbrace \right\}$, $\nu_{T}^{\left(i\right)}=h_o^{\left(i\right)}$, and $\nu_{T+1}^{\left(i\right)}=0$. The one-shot energy allocation policy allocates the available energy for the $i$th user $E_i$  to the earliest time slot $t\in\mathcal{T}$ that has a channel gain $h_t^{\left(i\right)}>\nu_{t+1}^{\left(i\right)}$.  We refer to~\cite{caire2004variable} for more insights and details.

Next, we develop a low-complexity suboptimal solution to solve the recursive DP introduced in~\eqref{eqn6}. We show through numerical simulations that the performance of the suboptimal algorithm is close to that of the optimal policy when the number of users is small. The suboptimal solution is obtained by applying the certainty equivalent controller (CEC) scheme (see Chapter~$6$ in~\cite{bertsekas1995dynamic}), in which the following three steps are applied at each time slot $t$:
\begin{enumerate}
\item \textbf{Certainty step}: We replace all uncertain variables with their means. Hence we assume that the future channel gain of each user over the remaining $T-t$ slots is equal to its mean, i.e., $\mathbf{h}_k=\mathbf{h}_o$  for $\ k=t+1,\cdots,T$  where $\mathbf{h}_o=\left[h_o^{\left(1\right)},\cdots,h_o^{\left(N\right)}\right],\ h_o^{\left(i\right)}=\int_{0}^{\infty} x dF_H^{\left(i\right)}\left(x\right)$.
\item \textbf{Optimization step}: After the certainty step, the recursive  optimization problem in~\eqref{eqn6} at time slot $t$ can be reformulated as the following deterministic optimization
\begin{equation}\label{eqn8}
\begin{aligned}
\max_{\tilde{\mathbf{e}}_t,\cdots,\tilde{\mathbf{e}}_T} & R\left(\tilde{\mathbf{e}}_t,\mathbf{h}_t\right)+\sum_{k=t+1}^{T} R\left(\tilde{\mathbf{e}}_k,\mathbf{h}_o\right)\\
\text{s.t.} &\quad \sum_{k=t}^{T}\tilde{e}_k^{\left(i\right)}\leq \mathcal{E}_t^{\left(i\right)},\ i \in \mathcal{N}\\
&\quad\tilde{\mathbf{e}}_k \succeq \mathbf{0},\ k=t,\cdots,T
\end{aligned}
\end{equation}
where the solution of the optimization problem in~\eqref{eqn8} is obtained in a similar way to the offline allocation problem introduced in Section~\ref{offline} by applying the IWF algorithm in Algorithm~\ref{IWF} over $T-t+1$ slots with an amount of energy available at each user $\mathcal{E}_t^{\left(i\right)}$  for $i\in\mathcal{N}$.
\item \textbf{Allocation step}: We set $\mathbf{e}_t=\tilde{\mathbf{e}}_t$  and compute the energy levels of all users at $t+1$  using Equation~\ref{energylevel}. Then, we go to the next time slot $t+1$.
\end{enumerate}

\section{Numerical results}
\label{results}

\begin{figure}
\centering
\includegraphics[width=9 cm, height=7 cm]{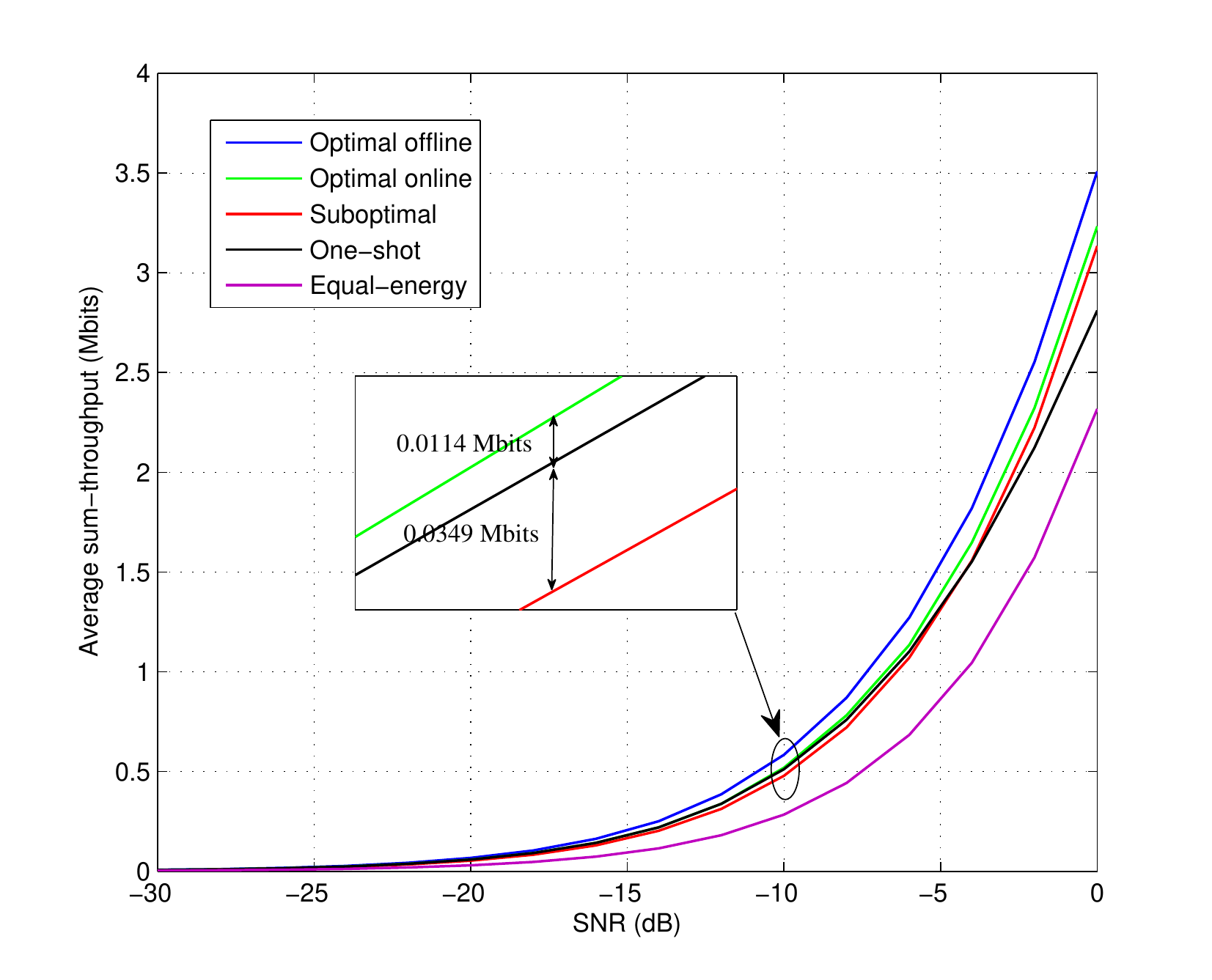}
\caption{Average sum-throughput in Mbits in the low SNR regime for $T=5$ and $N=2$.}
\label{fig1}
\end{figure}

In this section, we numerically evaluate the performance of various energy allocation policies introduced throughout the paper.  For comparison, we consider a simple energy allocation policy namely the equal-energy allocation, where each user allocates an equal amount of energy for each time slot of the transmission window regardless the effect of the channel fading and the allocation policy of the other users, i.e.,
\begin{equation}
e_{t}^{\left(i\right)}=\frac{E_i}{T},\ \forall i\in\mathcal{N},\ \forall t\in\mathcal{T}
\end{equation}
Notice that this policy is optimal in case of time-invariant channels, where the channel gain of each user is constant over the deadline.

\begin{figure}
\centering
\includegraphics[width=9 cm, height=7 cm]{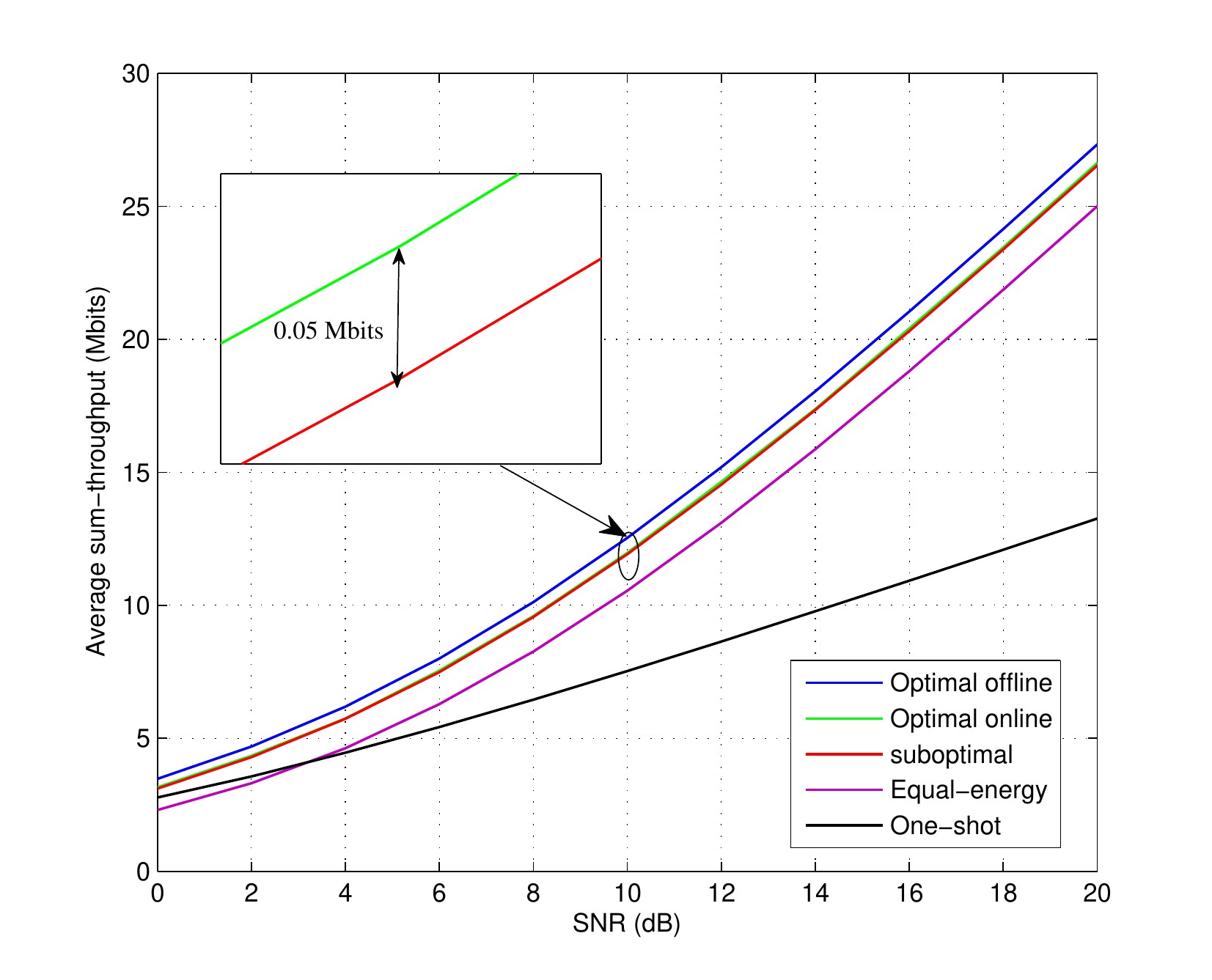}
\caption{Average sum-throughput in Mbits in the high SNR regime for $T=5$ and $N=2$}
\label{fig2}
\end{figure}

\begin{figure}
\centering
\begin{subfigure}{.5\textwidth}
\includegraphics[width=9 cm, height=7 cm]{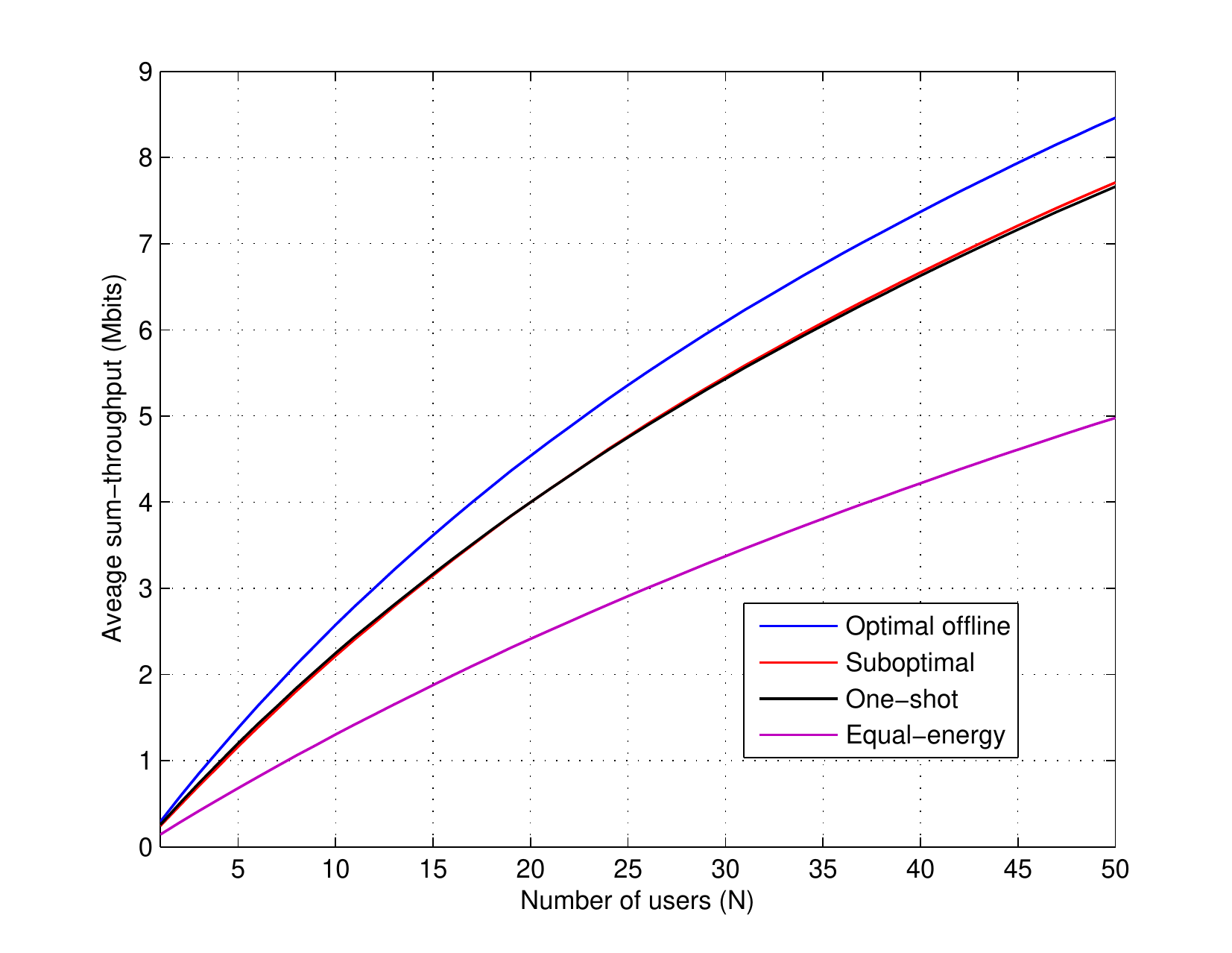}
\centering
\caption{$\text{SNR}=-10$ dB}
\label{fig3:1}
\end{subfigure}
\begin{subfigure}{.5\textwidth}
\includegraphics[width=9 cm, height=7 cm]{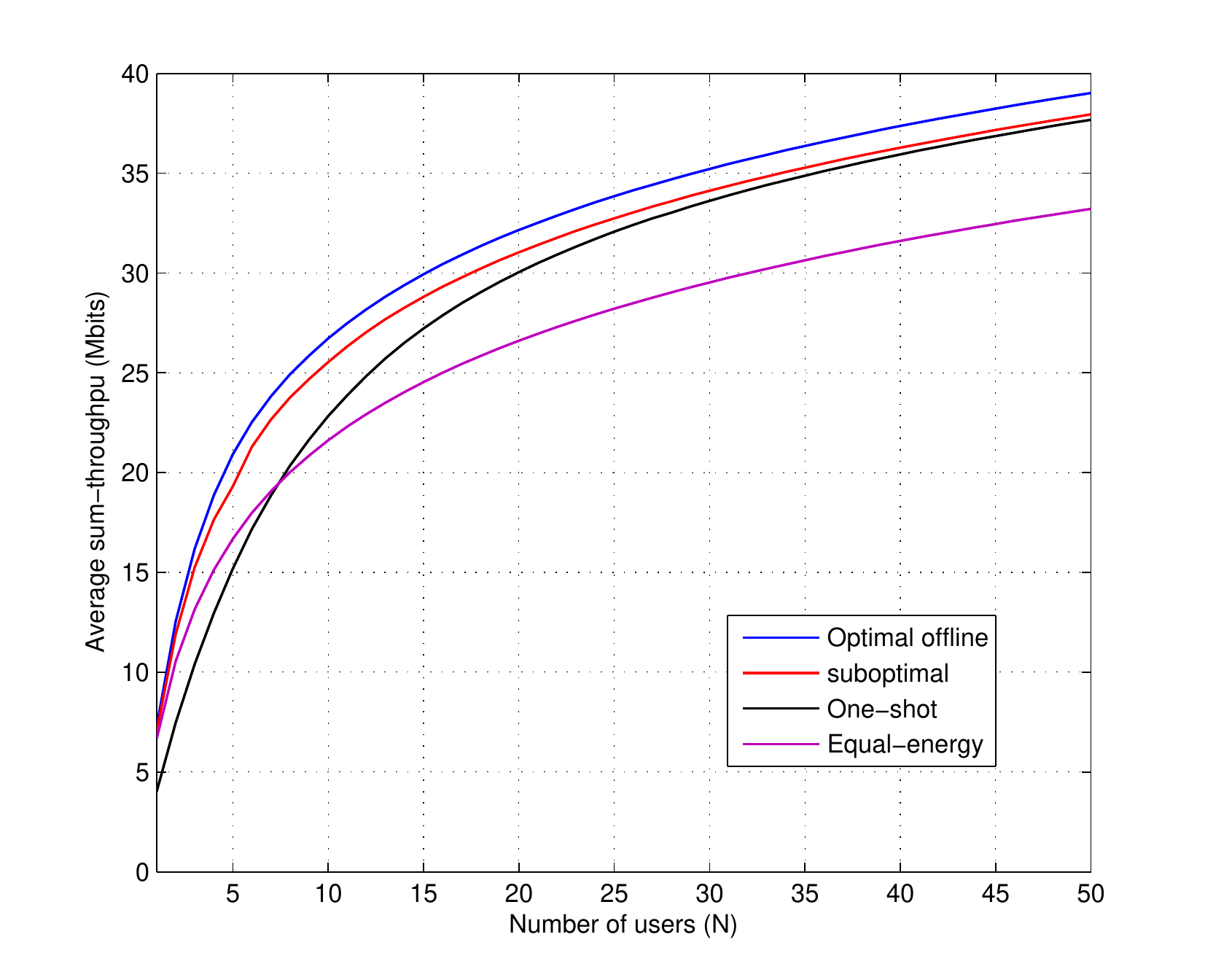}
\centering
\caption{$\text{SNR}=10$ dB}
\label{fig3:2}
\end{subfigure}
\caption{Average sum-throughput of the MAC versus the number of users $N$ for $\text{SNR}=-10$ dB and $\text{SNR}=10$ dB}
\end{figure}

For simplicity, we consider a symmetrical case, where all users are equipped with an equal amount of energy, i.e., $E_i=E$, $\forall i\in\mathcal{N}$, and the channel gains of all users are i.i.d., where the channel gains are generated according to the exponential distribution with parameter $\lambda=1$, i.e., $F_{{H}}^{\left(i\right)}\left(x\right)=1-e^{-x}$, $\forall i\in\mathcal{N}$. Also, we consider the following parameters: the bandwidth $W=1$ MHz, the noise power $N_o=1$ watts, and the slot length $\tau=1$ seconds, and hence, the transmit SNR of each user $\text{SNR}_i=E$, $\forall i\in\mathcal{N}$. We use the performance of the offline allocation policy as an upper bound on the performance of online policies. In the following figures, the performance of the optimal offline, suboptimal, one-shot, and equal-energy policies are obtained by averaging over $10^4$ randomly generated channel realizations, while the performance of the optimal online policy is obtained by using the discretization method~\cite{bertsekas1995dynamic}.

Figs~\ref{fig1} and~\ref{fig2} show the average sum-throughput of the MAC versus the transmit SNR of each user for a system composed of $N=2$ users and transmission window length equal to $T=5$ time slots. Fig.~\ref{fig1} focuses on the low SNR regime where the SNR is varied from $-30$ dB to $0$ dB. It is clear that the performance of the proposed suboptimal and the one-shot policies is close to the optimal one, although, the proposed suboptimal policy performs better when the SNR approaches $0$ dB. Moreover, the equal-energy allocation policy has the worst performance. In Fig.~\ref{fig2}, the SNR varies from $0$ dB to $20$ dB to investigate the performance of the different policies in the medium and high SNR regimes. We can see from this figure that the one-shot policy deviates from the optimal solution, since the linear approximation of the throughput function is no longer valid at high SNR. However, the performance of the proposed suboptimal policy is still very close to that of the optimal solution.

Next, we investigate the effect of the number of users $N$ on the performance of different policies. Fig.~\ref{fig3:1} and Fig.~\ref{fig3:2} show the average sum throughput of the system (for $T=5$ slots) at SNR$=-10$ dB and SNR=$10$ dB, respectively. We can see from Fig.~\ref{fig3:1} that both the proposed suboptimal policy and the one-shot policy almost have the same performance for any number of users in the low SNR regime. 

When the number of users is much larger than the time slots of the transmission window, i.e., $N\gg T$, each time slot of the transmission widow would be shared with a lot of users. In other words, each user would suffer from high interference signals at each time slot of the transmission window. Therefore the best choice is to allocate the available energy of each user to a single time slot of the transmission window that has a favorable channel gain. Hence Fig.~\ref{fig3:2} shows that the one-shot policy converges to the proposed suboptimal policy in the high SNR regime for $N\gg T$. However, the equal-energy allocation policy has better performance than the one-shot policy when the number of users is small. On the other hand, Fig.~\ref{fig3:1} and Fig.~\ref{fig3:2} show that the gap between the equal-energy allocation policy and the suboptimal policy increases as the number of users increases since the competition on the available resources (the time slots of the transmission window) increases as the number of users increases.

\section{Conclusion}
\label{conclusion}

In this paper, we have proposed energy allocation strategies for the $N$-user fading MAC with delay and energy constraints under two different assumptions on the channel states information. In the offline allocation, a convex optimization problem is formulated with the objective of maximizing the sum-throughput of the fading MAC within the transmission window where the optimal solution is obtained by applying the iterative water filling algorithm. In the online allocation, the problem is formulated via dynamic programming, and the optimal solution is obtained numerically by using the discretization method when the number of users is small. In addition, we have proposed a suboptimal solution with reduced computational complexity that can be used when the number of users is large. Numerical results have been provided to show the superiority of the proposed algorithms compared to the equal-energy allocation and the one-shot allocation algorithms.

\appendix
The optimal offline transmission policy is obtained by solving the optimization problem~\eqref{eqn2}. Since the objective function of~\eqref{eqn2} is the sum of concave functions $R\left(\mathbf{e},\mathbf{h}\right)$, and the constrains are affine functions, then the optimization problem~\eqref{eqn2} is a convex optimization problem that can be solved using Lagrange method. The Lagrangian is given by
\begin{equation}
\mathcal{L}=\sum_{t=1}^{T} R\left(\mathbf{e}_t,\mathbf{h}_t\right)-\sum_{i=1}^{N}\mu^{\left(i\right)}\left(\sum_{t=1}^{T}e_t^{\left(i\right)}-E_i\right)-\sum_{i=1}^{N}\sum_{t=1}^{T}\lambda_{t}^{\left(i\right)}e_{t}^{\left(i\right)}
\end{equation}
where $\mu^{\left(i\right)}$ is the Lagrange multiplier associated with the $i$th equality constraint in~\eqref{eqn1}, and $\lambda_t^{\left(i\right)}$ is the Lagrange multiplier associated with the $i$th inequality constraint in~\eqref{eqn1}. Slater’s condition is satisfied for this problem, and hence, the Karush-Kuhn-Tucker (KKT) conditions are necessary and sufficient for optimality~\cite{boyd2004convex}. The KKT conditions are given by
\begin{subequations}\label{eqn5}
\begin{align}
\frac{\partial\mathcal{L}}{\partial e_t^{\left(i\right)}}&= \frac{ \tau W h_t^{\left(i\right)}\ln\left(2\right)}{\tau N_o+\sum_{k=1}^{N}h_t^{\left(k\right)}e_t^{\left(k\right)}}-\mu^{\left(i\right)}-\lambda_t^{\left(i\right)}\ \: \forall i\in\mathcal{N},\ t\in\mathcal{T}\\
\label{eqn5:2}
\lambda_t^{\left(i\right)}e_t^{\left(i\right)}&=0,\quad \lambda_t^{\left(i\right)}\geq 0,\quad e_t^{\left(i\right)}\geq 0,\quad \forall i\in\mathcal{N},\ t\in\mathcal{T}\\
\sum_{t=1}^{T}e_t^{\left(i\right)}&=E_i\ ,\quad \forall i\in\mathcal{N}
\end{align}
\end{subequations}
Solving the above KKT conditions yields \eqref{eqn3}--\eqref{last_IWF}.

\nocite{*}
\bibliographystyle{IEEEtran}

\end{document}